\numberwithin{equation}{section}
\newtheorem{theorem}{Theorem}[section]
\newtheorem{proposition}[theorem]{Proposition}
\newtheorem{remark}[theorem]{Remark}
\newtheorem{corollary}[theorem]{Corollary}
\newcommand{\p}{\partial}
\newcommand{\al}{\alpha}
\newcommand{\be}{\beta}
\newcommand{\tq}{{\theta_1}}
\newcommand{\tw}{{\theta_2}}
\newcommand{\du}{d\mu}
\begin{document}

\title[Generalisations of Cauchy bi-orthogonal polynomials and integrable lattices]{Two-parameter generalisations of Cauchy bi-orthogonal polynomials and integrable lattices}
\author{Xiang-Ke Chang}
\address{LSEC, ICMSEC, Academy of Mathematics and Systems Science, Chinese Academy
of Sciences, PO Box 2719, Beijing 100190, People's Republic of China}
\address{School of Mathematical Sciences, University of Chinese Academy of Sciences,
Beijing 100049, People's Republic of China
}
\email{changxk@lsec.cc.ac.cn}
\author{Shi-Hao Li}
\address{}
\email{lishihao@lsec.cc.ac.cn}
\author{Satoshi Tsujimoto}
\address{Department of Applied Mathematics and Physics, Graduate School of Imformatics, Kyoto University, Yoshida-Honmachi, Kyoto, Japan 606-8501}
\email{tujimoto@i.kyoto-u.ac.jp}
\author{Guo-Fu Yu}
\address{School of Mathematical Sciences, Shanghai Jiaotong University, People's Republic of China.}
\email{gfyu@sjtu.edu.cn}

\subjclass[2010]{37K10, 37K20, 15A15}
\date{}

\dedicatory{}

\keywords{two-parameter Cauchy two-matrix model; Toda-type lattice; Gram determinant technique}

\begin{abstract}
In this article, we consider the generalised two-parameter Cauchy two-matrix model and corresponding integrable lattice equation. It is shown that with parameters chosen as $1/k_i$ when $k_i\in\mathbb{Z}_{>0}$ ($i=1,\,2$), the average characteristic polynomials admit $(k_1+k_2+2)$-term recurrence relations, which provide us spectral problems for integrable lattices. The tau function is then given by the partition function of the generalised Cauchy two-matrix model as well as Gram determinant. The simplest example with exact solvability is demonstrated.
\end{abstract}

\maketitle
\section{Introduction}
The Cauchy two-matrix model was proposed during the studies of integrable systems, specifically on the analysis in the peakon solution of the Degasperis-Procesi equation \cite{lundmark03}. Later on, this matrix model has attracted much attention from groups in random matrix theory and many properties have been systematically studied. For example, the limiting behaviour of the Cauchy two-matrix model and the corresponding Riemann-Hilbert problem of the Cauchy bi-orthogonal polynomials were done in \cite{bertola092,bertola10,bertola09}; its connections with Bures ensemble in the levels of partition functions and 
correlation functions were demonstrated in \cite{forrester16,forrester19}. The joint probability density function (jPDF) of this model has the form
\begin{align*}
\frac{\prod_{1\leq j<k\leq N}(x_k-x_j)^2(y_k-y_j)^2}{\prod_{j,k=1}^N(x_j+y_k)}\prod_{j,k=1}^N\omega_1(x_j)\omega_2(y_k),\quad x_j,\,y_k\in\mathbb{R}_+
\end{align*}
with some non-negative weight functions $\omega_1$ and $\omega_2$. As is known, the development of random matrix model gives insights into orthogonal polynomials theory and classical integrable systems, e.g. the considerations of Hermitian matrix model with unitary invariance are closely related to orthogonal polynomials, and KP/1d-Toda hierarchy \cite{adler97,gerasimov91,tsujimoto00}, and it is interesting to know whether there are also some integrable systems behind the Cauchy two-matrix models. Very recently, the answer was given in \cite{li19}. It was shown that if $\omega_1=\omega_2$ and some proper time flows are involved, then the time-dependent partition function of the Cauchy two-matrix model can be regarded as the tau function of the CKP hierarchy as well as the so-called Toda lattice of the CKP type (C-Toda lattice for brevity). Moreover, the average characteristic polynomials of the Cauchy two-matrix model---the Cauchy bi-orthogonal polynomials $\{P_n(x)\}_{n=0}^\infty$ can act as the wave functions, and the remarkable four-term recurrence relation (with $a_n$, $b_n$, $c_n$ and $d_n$ properly chosen)
\begin{align}\label{1}
x(P_{n+1}(x)+a_nP_n(x))=P_{n+2}(x)+b_nP_{n+1}(x)+c_nP_n(x)+d_nP_{n-1}(x)
\end{align}
provides a $3\times 3$ spectral problem for the integrable system.

Compared with the previous work, there are two main motivations for our present studies. One is from the recent work on the integrable system related to a new class of extended affine Weyl group $\tilde{W}^{(k,k+1)}(A_l)$ \cite{zuo19}. As is known, the extended affine Weyl group $\tilde{W}^{(k)}(A_l)$ is related to the spectral operator (where $\Lambda$ is the shift operator)
\begin{align*}
\Lambda^k+a_{1}\Lambda^{k-1}+\cdots+a_{l+1}\Lambda^{k-l-1}, \quad 1\leq k<l,\, a_{l+1}\not=0,
\end{align*}
which can be regarded as the spectral problem of  the bigraded Toda hierarchy \cite{carlet06} and it has intimate connection with Muttalib-Borodin model, which can be viewed as a $\theta$-deformed model of Hermitian matrix model with unitary invariance. Regarding the extended affine Weyl group $\tilde{W}^{(k,k+1)}(A_l)$, the spectral problem is given by \cite{zuo19}
\begin{align}\label{sspectral}
(\Lambda-a_{l+2})^{-1}(\Lambda^{k+1}+a_1\Lambda^k+\cdots+a_{l+1}\Lambda^{k-l}),\quad 1\leq k<l,\, a_{l+1}a_{l+2}\not=0.
\end{align}
Obviously, the four-term recurrence relation \eqref{1} is a very special example with $k=1$ and $l=2$. It inspires us to consider whether there are any orthogonal polynomials system admitting such general recurrence relations and to find corresponding integrable lattices related to the general spectral problem.

Another motivation is from a very recent work on the $\theta$-deformation of the Cauchy two-matrix model \cite{forrester19}. The idea to generalise the Cauchy two-matrix model is to consider the jPDF
\begin{align}\label{cauchy}
\frac{\prod_{1\leq j<k\leq N}(x_k-x_j)(y_k-y_j)}{\prod_{j,k=1}^N(x_j+y_k)}\prod_{1\leq j<k\leq N}(x_k^\theta-x_j^\theta)(y_k^\theta-y_j^\theta)\prod_{j=1}^N\omega_1(x_j)\omega_2(y_j)dx_jdy_j
\end{align}
with some nonnegative weight functions $\omega_1$ and $\omega_2$. It was also shown in \cite{forrester19} that if the weight functions are specifically chosen as Laguerre weights, i.e. $\omega_1(x)=x^ae^{-x}$ and $\omega_2(y)=y^be^{-y}$, then the hard edge kernel of this model can be depicted in terms of the Fox H-kernel, generalising the original work about the Meijer G-kernel \cite{bertola09}. However, the jPDF in \eqref{cauchy} is not the most general case about the Cauchy two-matrix model. In other words, one can consider a two-parameter generalisation
\begin{align*}
\frac{\prod_{1\leq j<k\leq N}(x_k-x_j)(y_k-y_j)}{\prod_{j,k=1}^N(x_j+y_k)}\prod_{1\leq j<k\leq N}(x_k^\tq-x_j^\tq)(y_k^\tw-y_j^\tw)\prod_{j=1}^N\omega_1(x_j)\omega_2(y_j)dx_jdy_j,
\end{align*}
to distinguish the eigenvalues $\{x_k\}_{k=1}^N$ and $\{y_k\}_{k=1}^N$ not only from the weight functions, but the interactions within themselves.
Obviously, the averaged characteristic polynomials and corresponding Christoffel-Darboux kernel should be related to a two-parameter Cauchy bi-orthogonal polynomials, which motivates us to study the properties of these polynomials and the integrable systems behind this model.

This paper is organised as follows. In Section \ref{pol}, we'd like to make some discussions about the generalised Cauchy bi-orthogonal polynomials, which give us general spectral problems related to \eqref{sspectral}. Moreover, with the Laguerre weight, these generalised Cauchy bi-orthogonal polynomials can be connected with bi-orthogonal Jacobi polynomials. Besides, we show that these generalised bi-orthogonal polynomials can be written as a series sum and furthermore as a contour integral. In Section \ref{is}, we are devoted to the time evolutions and consider how to derive the corresponding integrable systems. Without symmetry property, the derivation becomes rather tough and the method in this paper is totally different with the ones shown in \cite{li19,miki11}. The simplest asymmetric case is considered to illustrate the connection with C-Toda lattice and some concluding remarks are given in the end.

\section{A two-parameter generalisation of Cauchy bi-orthogonal polynomials}\label{pol}
\subsection{Orthogonality and recurrence relation}
Consider the inner product $\langle\cdot,\cdot\rangle$ from $\mathbb{R}[x]\times \mathbb{R}[y]\to\mathbb{R}$ such that
\begin{align}\label{moments}
\langle x^i, y^j\rangle=\int_{\mathbb{R}_+\times \mathbb{R}_+}\frac{x^iy^j}{x+y}\du_1(x)\du_2(y):=m_{i,j}
\end{align}
with two non-negative measures $d\mu_1$ and $d\mu_2$,
then we can define a family of monic two-parameter Cauchy bi-orthogonal polynomials  $\{P_n,\, Q_n\}_{n=0}^\infty$, satisfying the orthogonal relation
\begin{align}\label{ls}
\langle P_n(x^\tq), Q_m(y^\tw)\rangle=h_n\delta_{n,m},\quad \text{for some $h_n$>0}.
\end{align}
Therefore, from the linear system \eqref{ls}, we can get a closed form for these polynomials, showing
\begin{align}\label{ops}
\begin{aligned}
P_n(x^\tq)&=\frac{1}{\tau_n}\left|
\begin{array}{cccc}
m_{0,0}&\cdots&m_{0,(n-1)\tw}&1\\
m_{\tq,0}&\cdots&m_{\tq,(n-1)\tw}&x^\tq\\
\vdots&&\vdots&\vdots\\
m_{n\tq,0}&\cdots&m_{n\tq,(n-1)\tw}&x^{n\tq}
\end{array}
\right|,\\
Q_n(y^\tw)&=\frac{1}{\tau_n}\left|
\begin{array}{cccc}
m_{0,0}&m_{0,\tw}&\cdots&m_{0,n\tw}\\
\vdots&\vdots&&\vdots\\
m_{(n-1)\tq,0}&m_{(n-1)\tq,\tw}&\cdots&m_{(n-1)\tq,n\tw}\\
1&y^\tw&\cdots&y^{n\tw}
\end{array}
\right|,
\end{aligned}
\end{align}
where $\tau_n$ is the normalisation factor $\det(m_{(k-1)\tq,(l-1)\tw})_{k,l=1}^n$. By direct computations, one can show $h_n=\tau_{n+1}/\tau_n$. Moreover, the existence and uniqueness of the polynomials defined by the linear system \eqref{ls} are equal to the  condition $\tau_n\not=0$,
which could be verified by using Andr\'eief formula and shown that
\begin{align*}
\tau_n=\frac{1}{(n!)^2}\int_{\mathbb{R}_+^n\times \mathbb{R}_+^n}\det\left[\frac{1}{x_j+y_k}
\right]_{j,k=1}^n\Delta_n(x^\tq)\Delta_n(y^\tw)\prod_{j=1}^nd\mu_1(x_j)d\mu_2(y_j),
\end{align*}
where $\Delta_n(z)=\prod_{1\leq j<k\leq n}(z_k-z_j)$ for the non-negative measures $d\mu_1$ and $d\mu_2$.

Firstly, we'd like to consider the recurrence relations of these two-parameter polynomials with general weights, which is of essential importance to clarify the properties of these families of polynomials. For simplicity, we'd like to constrain ourselves to the case $\theta_i=1/k_i,\, k_i\in\mathbb{Z}_{>0}$ throughout the paper.
\begin{proposition}\label{rr}
For the two-parameter Cauchy bi-orthogonal polynomials $\{P_n(x^\tq)\}_{n=0}^\infty$, they have the following $(k_1+k_2+2)$-term recurrence relation
\begin{align}\label{rr1}
x\left(
P_{n+1}(x^\tq)+a_nP_n(x^\tq)\right)=\sum_{\al=n-k_2}^{n+k_1+1}\eta_{n,\al}P_\al(x^\tq),
\end{align}
where
\begin{align*}
a_n=-\frac{\int_{\mathbb{R}_+}P_{n+1}(x^\tq)\du_1(x)}{\int_{\mathbb{R}_+}P_{n}(x^\tq)\du_1(x)},\quad \eta_{n,\al}=\frac{\langle x(P_{n+1}(x^\tq)+a_nP_n(x^\tq)),  Q_\al(y^\tw)\rangle}{\langle
P_\al(x^\tq), Q_\al(y^\tw)
\rangle}.
\end{align*}
\end{proposition}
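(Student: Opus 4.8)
The plan is to expand the left-hand side of \eqref{rr1} in the basis $\{P_\al(x^\tq)\}$ and to fix the two endpoints of the summation range separately: the upper limit by a degree count and the lower limit through bi-orthogonality. Write $f(x)=P_{n+1}(x^\tq)+a_nP_n(x^\tq)$. Since $P_{n+1}$ is monic of degree $n+1$, $f$ is a polynomial in $x^\tq$ of degree $n+1$; multiplying by $x=(x^\tq)^{k_1}$ raises the degree in the variable $x^\tq$ by exactly $k_1$, so $xf$ is a polynomial in $x^\tq$ of degree $n+k_1+1$. As $\{P_\al(x^\tq)\}_{\al=0}^{n+k_1+1}$ is a triangular basis for such polynomials, we may write $xf=\sum_{\al=0}^{n+k_1+1}\eta_{n,\al}P_\al(x^\tq)$, which already yields the upper limit $n+k_1+1$. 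Pairing against $Q_\al(y^\tw)$ and using \eqref{ls} isolates each coefficient as $\eta_{n,\al}=\langle xf,Q_\al\rangle/\langle P_\al,Q_\al\rangle$, matching the stated formula. It then remains to prove $\eta_{n,\al}=0$ for $\al<n-k_2$.

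The crucial ingredient is the Cauchy structure of the kernel. First I would record the elementary identity $\frac{x}{x+y}=1-\frac{y}{x+y}$, which gives
\[
\langle xf(x),Q_\al(y^\tw)\rangle=\Big(\int_{\mathbb{R}_+}f\,\du_1\Big)\Big(\int_{\mathbb{R}_+}Q_\al(y^\tw)\,\du_2\Big)-\langle f(x),yQ_\al(y^\tw)\rangle,
\]
splitting the weighted pairing into a factorised (rank-one) term and a pairing in which the extra factor $x$ has been traded for $y$. The definition of $a_n$ is tailored precisely so that the first factor $\int_{\mathbb{R}_+}f\,\du_1=\int P_{n+1}\,\du_1+a_n\int P_n\,\du_1$ vanishes; hence the factorised term drops out and $\langle xf,Q_\al\rangle=-\langle f,yQ_\al\rangle$.

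Now $yQ_\al(y^\tw)=(y^\tw)^{k_2}Q_\al(y^\tw)$ is a polynomial in $y^\tw$ of degree $\al+k_2$, so it expands as $\sum_{\be=0}^{\al+k_2}d_\be Q_\be(y^\tw)$. Substituting this and invoking bi-orthogonality \eqref{ls} against $f=P_{n+1}+a_nP_n$ leaves only the terms $\be=n$ and $\be=n+1$, so that $\langle f,yQ_\al\rangle=d_{n+1}\langle P_{n+1},Q_{n+1}\rangle+a_nd_n\langle P_n,Q_n\rangle$. The coefficients $d_n,d_{n+1}$ can be nonzero only when $n\le\al+k_2$, i.e. $\al\ge n-k_2$; for $\al<n-k_2$ both vanish, whence $\eta_{n,\al}=0$. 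Combined with the degree bound this confines the sum to $n-k_2\le\al\le n+k_1+1$, completing the argument.

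I expect the main obstacle to be careful bookkeeping rather than anything deep: one must consistently track degrees in the two distinct variables $x^\tq$ and $y^\tw$ (with $x=(x^\tq)^{k_1}$ and $y=(y^\tw)^{k_2}$) when applying the kernel identity, and verify that the bilinear form extends to the fractional-power monomials occurring here, so that the moment entries $m_{i\tq,j\tw}$ in \eqref{ops} are indeed the relevant objects. A secondary point to address is the well-definedness of $a_n$, which presupposes $\int_{\mathbb{R}_+}P_n(x^\tq)\,\du_1(x)\neq0$; I would either state this as a genericity hypothesis on the measures or justify it from the positivity of the associated Gram determinants $\tau_n$.
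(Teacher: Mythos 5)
Your proposal is correct and follows essentially the same route as the paper's proof: the upper limit comes from the degree count in $x^{\theta_1}$, and the lower limit comes from the identity $\langle xf,Q_\alpha\rangle=-\langle f,yQ_\alpha\rangle$ (forced by the choice of $a_n$) together with $yQ_\alpha\in\mathrm{span}\{Q_0,\dots,Q_{\alpha+k_2}\}$ and bi-orthogonality. You merely make explicit the kernel identity $\tfrac{x}{x+y}=1-\tfrac{y}{x+y}$ and the genericity caveat that the paper leaves implicit.
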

\begin{proof}
The proof is based on the speciality of $a_n$ and the orthogonality of these polynomials. By making the use of $a_n$, one can find
\begin{align*}
\langle x(P_{n+1}(x^\tq)+a_nP_n(x^\tq)),  Q_m(y^\tw)\rangle=-\langle P_{n+1}(x^\tq)+a_nP_n(x^\tq), yQ_m(y^\tw)\rangle.
\end{align*}
Furthermore, since $yQ_m(y^\tw)\in\text{span}\{Q_0(y^\tw),\cdots, Q_{m+k_2}(y^\tw)\}$, we know the above equation equals zero if $m+k_2<n$ according to the orthogonality. Noting that
\begin{align*}
x(P_{n+1}(x^\tq)+a_nP_n(x^\tq))=\sum_{\be=0}^{n+k_1+1}\eta_{n,\be}P_\be(x^\tq) \end{align*}
and \begin{align*} \langle x(P_{n+1}(x^\tq)+a_nP_n(x^\tq)),  Q_m(y^\tw)\rangle=0
\quad \text{if $m<n-k_2$},\end{align*}
we know $\eta_{n,\be}=0$ if $\be<n-k_2$ and the coefficients of $\{\eta_{n,\al}\}_{\al=n-k_2}^{n+k_1+1}$ can be computed from the orthogonality.
\end{proof}

\begin{corollary}
There is a dual recurrence relation for the polynomials $\{Q_n(y^\tw)\}_{n=0}^\infty$
\begin{align}\label{rr2}
y\left(
Q_{n+1}(y^\tw)+\hat{a}_nQ_n(y^\tw)
\right)=\sum_{\al=n-k_1}^{n+k_2+1}\hat{\eta}_{n,\al}Q_\al(y^\tw),
\end{align}
where
\begin{align*}
\hat{a}_n=-\frac{\int_{\mathbb{R}_+}Q_{n+1}(y^\tw)\du_2(y)}{\int_{\mathbb{R}_+}Q_{n}(y^\tw)\du_2(y)},\quad \hat{\eta}_{n,\al}=\frac{\langle P_\al(x^\tq), y(
Q_{n+1}(y^\tw)+\hat{a}_nQ_n(y^\tw))\rangle}{\langle
P_\al(x^\tq), Q_\al(y^\tw)
\rangle}.
\end{align*}
\end{corollary}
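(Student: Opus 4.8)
The plan is to prove this as the exact dual of the argument for Proposition \ref{rr}, interchanging the roles of $x$ and $y$, of $P$ and $Q$, and of $k_1$ and $k_2$. First I would record the bilinear identity that drives the whole mechanism: since the kernel satisfies $(x+y)\cdot\frac{1}{x+y}=1$, one has for any $f\in\mathbb{R}[x]$ and $g\in\mathbb{R}[y]$
\begin{align*}
\langle x f(x^{\tq}),\, g(y^{\tw})\rangle + \langle f(x^{\tq}),\, y\,g(y^{\tw})\rangle
= \left(\int_{\mathbb{R}_+}f(x^{\tq})\du_1(x)\right)\left(\int_{\mathbb{R}_+}g(y^{\tw})\du_2(y)\right).
\end{align*}
This is precisely the identity the proof of Proposition \ref{rr} invokes (there with $f=P_{n+1}+a_nP_n$ and $g=Q_m$); here I instead apply it with $g=Q_{n+1}+\hat{a}_nQ_n$.

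Next, the definition $\hat{a}_n=-\int_{\mathbb{R}_+}Q_{n+1}\du_2\big/\int_{\mathbb{R}_+}Q_n\du_2$ is made exactly so that $\int_{\mathbb{R}_+}(Q_{n+1}(y^{\tw})+\hat{a}_nQ_n(y^{\tw}))\du_2(y)=0$. Substituting this $g$ into the bilinear identity annihilates the right-hand side and yields
\begin{align*}
\langle P_m(x^{\tq}),\, y\,(Q_{n+1}(y^{\tw})+\hat{a}_nQ_n(y^{\tw}))\rangle
= -\langle x\,P_m(x^{\tq}),\, Q_{n+1}(y^{\tw})+\hat{a}_nQ_n(y^{\tw})\rangle.
\end{align*}

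Now I exploit $\theta_i=1/k_i$, so that $x=(x^{\tq})^{k_1}$ and $y=(y^{\tw})^{k_2}$. Thus $x\,P_m(x^{\tq})$ is a polynomial in $x^{\tq}$ of degree $m+k_1$, hence lies in $\text{span}\{P_0,\dots,P_{m+k_1}\}$; by the orthogonality \eqref{ls} the right-hand side above vanishes whenever $m+k_1<n$, i.e. $m<n-k_1$. On the other side, $y\,(Q_{n+1}(y^{\tw})+\hat{a}_nQ_n(y^{\tw}))$ is a polynomial in $y^{\tw}$ of degree $n+k_2+1$, so it admits an expansion $\sum_{\be=0}^{n+k_2+1}\hat{\eta}_{n,\be}Q_\be(y^{\tw})$. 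Pairing this expansion against $P_m$ and using $\langle P_m,Q_\be\rangle=h_m\delta_{m,\be}$ shows that $\hat{\eta}_{n,m}$ is proportional to the left-hand side, which is zero for $m<n-k_1$; hence $\hat{\eta}_{n,\be}=0$ for $\be<n-k_1$, the sum truncates from below at $\al=n-k_1$, and reading off the surviving coefficients from the same pairing gives the stated formula for $\hat{\eta}_{n,\al}$.

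I do not expect a genuine obstacle, since the statement is the mirror image of Proposition \ref{rr}. The only point requiring care is the asymmetry in the index bounds: the lower limit $n-k_1$ is controlled by $k_1$ (the degree increase $P_m\mapsto (x^{\tq})^{k_1}P_m$), whereas the upper limit $n+k_2+1$ is controlled by $k_2$ (the degree increase $Q_{n+1}\mapsto (y^{\tw})^{k_2}Q_{n+1}$), so the two parameters enter in the opposite way from the $P$-recurrence \eqref{rr1} and must be tracked correctly.
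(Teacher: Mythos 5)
Your argument is correct and is exactly the intended one: the paper gives no separate proof of this corollary, treating it as the mirror image of Proposition \ref{rr}, whose proof you have faithfully dualised (the bilinear identity killed by the choice of $\hat{a}_n$, then degree counting with $x=(x^{\tq})^{k_1}$ and $y=(y^{\tw})^{k_2}$). Your remark on why $k_1$ controls the lower index bound and $k_2$ the upper one is the only point of substance, and you have it right.
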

Therefore, the recurrence relations \eqref{rr1} and \eqref{rr2} are in fact the spectral problems \eqref{sspectral} if we write the spectral problem in matrix form and introduce the shift operator. Moreover, the constraint in $a_{l+1}a_{l+2}\not=0$ in \eqref{sspectral} is again equal to the constraint the $\tau_n\not=0$, which we have discussed before.

\subsection{Special case: two-parameter Cauchy-Laguerre bi-orthogonal polynomials}
To claim the importance of these polynomials, in this subsection we'd show that with Laguerre weights, these polynomials are related to bi-orthogonal Jacobi polynomials in \cite{borodin98,madhekar82}, which could be used in the further studies about the hard edge scaling limit of the two-parameter Cauchy-Laguerre matrix model. Moreover, the partition partition shown in this subsection can be viewed as a special case of what we would discuss in Section \ref{is}.

Let's consider the Laguerre weights $\du_1(x)=x^ae^{-x}dx$ and $\du_2(y)=y^be^{-y}dy$. In this case, the moments in \eqref{moments} could be written as
\begin{align*}
I_{j,k}:=m_{(j-1)\tq, (k-1)\tw}=\int_{\mathbb{R}_+^2}\frac{x^{a+(j-1)\tq}y^{b+(k-1)\tw}}{x+y}e^{-(x+y)}dxdy.
\end{align*}
These moments are two-parameter generalisations of Cauchy bi-orthogonal polynomials which are slightly different from the one-parameter generalisation considered in \cite{forrester19}---unlike the procedure to connect the one-parameter Cauchy bi-orthogonal polynomials with the standard Jacobi polynomials, we'd like to evaluate the moments and connect them with Jacobi bi-orthogonal polynomials \cite{madhekar82}. Consider an evolution of the moments
\begin{align}\label{tm}
J_{j,k}(s)=\int_{\mathbb{R}_+^2}\frac{x^{a+(j-1)\tq}y^{b+(k-1)\tw}}{x+y}e^{-s(x+y)}dxdy,
\end{align}
and by making the use of transformations on variables $x=\tilde{x}/s$ and $y=\tilde{y}/s$, one can show $$J_{j,k}(s)=s^{-(1+a+b+\tq(j-1)+\tw(k-1))}I_{j,k}.$$
Therefore, $$\frac{d}{ds}J_{j,k}(s)=-s^{-(2+a+b+\tq(j-1)+\tw(k-1))}\left(1+a+b+\tq(j-1)+\tw(k-1)\right)I_{j,k}.$$
On the other hand, the derivative of $s$ can directly lead us to
\begin{align}\label{cancel}
\frac{d}{ds}J_{j,k}(s)&=\int_{\mathbb{R}_+^2} x^{a+\tq(j-1)}y^{b+\tw(k-1)}e^{-s(x+y)}dxdy=
\int_{\mathbb{R}_+}x^{a+\tq(j-1)}e^{-sx}dx
\int_{\mathbb{R}_+}y^{b+\tw(k-1)}e^{-sy}dy\nonumber\\
&=-s^{(2+a+b+\tq(j-1)+\tw(k-1))}\Gamma(1+a+\tq(j-1))\Gamma(1+b+\tw(k-1)).
\end{align}
The equivalence of these two expressions gives rise to
\begin{align*}
I_{j,k}=\frac{\Gamma(1+a+\tq(j-1))\Gamma(1+b+\tw(k-1))}{1+a+b+\tq(j-1)+\tw(k-1)}.
\end{align*}
If we consider two systems of functions \cite[Prop. 3.3]{borodin98}
\begin{align*}
\xi_n(x^\tq)&=\sum_{i=0}^n(-1)^{i}\frac{\left((1+a+b+\tq i)/\tw\right)_n}{i!(n-i)!}x^{\tq i}:=\sum_{i=0}^n c_{n,i}x^{\tq i},\\ \psi_n(x^\tw)&=\sum_{i=0}^n (-1)^{i}\frac{\left((1+a+b+\tw i)/\tq\right)_n}{i!(n-i)!}x^{\tw i}:=\sum_{i=0}^n d_{n,i}x^{\tw i},
\end{align*}
where $(a)_m=a(a+1)\cdots(a+m-1)$ stands for the Pochhammer symbol, then
one can check that $\{\xi_n,\,\psi_n\}_{n=0}^\infty$ are bi-orthogonal in $L^2\left([0,1],x^{a+b}dx\right)$. In other words,
\begin{align*}
\int_0^1 \xi_n(x^\tq)\psi_m(x^\tw)x^{a+b}dx=\tilde{h}_n\delta_{n,m},\quad \tilde{h}_n=\frac{1}{1+a+b+(\tq+\tw)n}.
\end{align*}
The moments of the above bi-orthogonal polynomials are given by
\begin{align*}
\int_0^1 x^{\tq(j-1)}x^{\tw(k-1)}x^{a+b}dx=\frac{1}{1+a+b+\tq(j-1)+\tw(k-1)}.
\end{align*}
Furthermore, if we set
\begin{align}\label{cbop1}
\hat{P}_n(x^\tq)=\sum_{l=0}^n \frac{c_{n,l}}{\Gamma(1+a+\tq l)}x^{\tq l},\quad \hat{Q}_n(y^\tw)=\sum_{l=0}^n \frac{d_{n,l}}{\Gamma(1+b+\tw l)}y^{\tw l},
\end{align}
then it is found that
\begin{align*}
\int_{\mathbb{R}_+^2} &\frac{e^{-(x+y)}}{x+y} \hat{P}_m(x^\tq)\hat{Q}_n(y^\tw) x^ay^b dxdy=\sum_{j=0}^m\sum_{k=0}^n \frac{c_{m,j}d_{n,k}}{\Gamma(1+a+\tq j)\Gamma(1+b+\tw k)}I_{j+1,k+1}\\
&=\sum_{j=0}^m\sum_{k=0}^n c_{m,j}d_{n,k}I_{j+1,k+1}=\int_{0}^1 \xi_m(x^\tq)\psi_n(x^\tw)x^{a+b}dx=\tilde{h}_n\delta_{n,m}.
\end{align*}
Therefore, the polynomials defined in \eqref{cbop1} are Cauchy bi-orthogonal polynomials with Laguerre weight. To make it monic, let's take
\begin{align*}
P_n(x^\tq)&=\frac{\Gamma(1+a+\tq n)}{c_{n,n}}\hat{P}_n(x^\tq)=\frac{\Gamma(n+1)\Gamma(1+a+\tq n)\Gamma((1+a+b+\tq n)/\tw)}{\Gamma\left((1+a+b+(\tq+\tw)n)/\tw\right)}\hat{P}_n(x^\tq),\\
Q_n(y^\tw)&=\frac{\Gamma(1+b+\tw n)}{d_{n,n}}\hat{Q}_n(y^\tw)=\frac{\Gamma(n+1)\Gamma(1+b+\tw n)\Gamma((1+a+b+\tw n)/\tq)}{\Gamma\left((1+a+b+(\tq+\tw)n)/\tq\right)}\hat{Q}_n(y^\tw),
\end{align*}
and thus $h_n$ in orthogonal relation \eqref{ls} is
\begin{align*}
\frac{(\Gamma(n+1))^2\Gamma(1+a+\tq n)\Gamma(1+b+\tw n)\Gamma((1+a+b+\tq n)/\tw)\Gamma((1+a+b+\tw n)/\tq)}{(1+a+b+(\tq+\tw)n)\Gamma((1+a+b+(\tq+\tw)n)/\tw)\Gamma((1+a+b+(\tq+\tw)n)/\tq)}.
\end{align*}
Moreover, the two-parameter Cauchy bi-orthogonal polynomials can be written as contour integral expression
\begin{align*}
\hat{P}_n(x^\tq)=\int_\gamma \frac{du}{2\pi i}\frac{\Gamma((1+a+b-\tq u)/\tw+n)\Gamma(u)}{\Gamma(1+n+u)\Gamma(1+a-\tq u)\Gamma((1+a+b-\tq u)/\tw)}x^{-\tq u},
\end{align*}
where $\gamma$ is a contour encloses $\{0,-1,\cdots,-n\}$ and $\hat{Q}_n(y^\tw)=\hat{P}_n(x^\tq)|_{a\leftrightarrow b, \, x\leftrightarrow y,\, \tq\leftrightarrow\tw}$. These polynomials can be expressed in terms of Fox H-functions as well and in the special symmetric case, the expressions degenerate to \cite[Eqs. (2.8a)-(2.8b)]{forrester19}.

\section{Time evolutions and integrable lattices}\label{is}
In this section, let's consider such time evolutions in the measures that
\begin{align}\label{td}
\begin{aligned}
&I_{j,k}(t)=\langle x^{\tq(j-1)}, y^{\tw(k-1)}\rangle=\int_{\mathbb{R}_+^2}\frac{x^{\tq(j-1)}y^{\tw(k-1)}}{x+y}\du_1(x;t)\du_2(y;t),\\
&\text{$\p_{t}\du_1(x;t)=x\du_1(x;t)$,\quad $\p_t\du_2(y;t)=y\du_2(y;t)$.}
\end{aligned}
\end{align}
This kind of assumption is made to decouple the double integral into single integrals, like the rank $1$ shift condition discussed in \cite{bertola10,li20}, which is useful to compute in terms of $x$ and $y$ seperately (c.f. Equation \eqref{cancel}). Specifically, one of the special solutions of $d\mu_1$ and $d\mu_2$ is $\du_1(x;t)=e^{tx}\du_1(x)$ and $\du_2(y;t)=e^{ty}\du_2(y)$ and if we take $\du_1(x;t)=e^{tx}x^adx$ and $\du_2(y;t)=e^{ty}y^bdy$, then the moments $I_{j,k}(t)$ are the same with $J_{j,k}(-t)$ defined in \eqref{tm}. However, the measures $\du_1(x;t)$ and $\du_2(y;t)$ we consider are arbitrary $L^2$-integrable measures satisfying \eqref{td}, and thus, the partition function of the two-parameter Cauchy-Laguerre matrix model is a special tau function we discuss below.

From the time-dependent moments/inner product \eqref{td}, we can similarly define a family of time-dependent two-parameter Cauchy bi-orthogonal polynomials $\{P_n(x^\tq;t)\}_{n=0}^\infty$ and $\{Q_n(y^\tw;t)\}_{n=0}^\infty$ via the orthogonal relation
\begin{align}\label{ls2}
\text{$\langle P_n(x^{\tq};t), Q_m(y^{\tw};t)\rangle=h_n(t)\delta_{n,m}$ with $h_n=\frac{\tau_{n+1}(t)}{\tau_n(t)}$.}
\end{align}
Please note that $\{P_n(x^\tq;t)\}_{n=0}^\infty$ (resp. $\{Q_n(y^\tw;t)\}_{n=0}^\infty$) also admit $(k_1+k_2+2)$-term recurrence relations following the Proposition \ref{rr} with the coefficients $a_n$, $\hat{a}_n$, $\eta_{n,\al}$ and $\hat{\eta}_{n,\al}$ time-dependent.
To derive the corresponding integrable system, we need the following derivative formula for the two-parameter Cauchy bi-orthogonal polynomials.
\begin{proposition}\label{dert1}
We have the time evolution equation
\begin{align}\label{te}
\p_t\left(
P_{n+1}(x^\tq;t)+a_nP_n(x^\tq;t)
\right)=\frac{\p_t(a_nh_n)}{h_n}P_n(x^\tq;t).
\end{align}
\end{proposition}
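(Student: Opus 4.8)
The plan is to differentiate the polynomial $P_{n+1}(x^\tq;t)+a_nP_n(x^\tq;t)$ in $t$ and identify the result by expanding it in the basis $\{P_m(x^\tq;t)\}$ and computing the coefficients via the bi-orthogonality \eqref{ls2}. First I would observe that, since $P_{n+1}$ and $P_n$ are monic of degrees $n+1$ and $n$ (in the variable $x^\tq$), the combination $P_{n+1}+a_nP_n$ has leading term $x^{(n+1)\tq}$ with no $t$-derivative on that coefficient; hence $\p_t(P_{n+1}+a_nP_n)$ is a polynomial of degree at most $n$ in $x^\tq$, and we may write
\begin{align*}
\p_t\left(P_{n+1}(x^\tq;t)+a_nP_n(x^\tq;t)\right)=\sum_{m=0}^{n}c_{n,m}(t)\,P_m(x^\tq;t),\qquad c_{n,m}=\frac{\langle \p_t(P_{n+1}+a_nP_n),\,Q_m\rangle}{\langle P_m,Q_m\rangle}.
\end{align*}
So the whole proposition reduces to showing $c_{n,m}=0$ for $m<n$ and $c_{n,n}=\p_t(a_nh_n)/h_n$.

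Next I would compute $\langle \p_t(P_{n+1}+a_nP_n),Q_m\rangle$ by differentiating the orthogonality relation. The key input is the time-evolution of the inner product: from \eqref{td} one has $\p_t m_{i,j}=\p_t\langle x^i,y^j\rangle=\langle x^{i+1},y^j\rangle+\langle x^i,y^{j+1}\rangle$, equivalently $\p_t\langle f(x),g(y)\rangle=\langle xf,g\rangle+\langle f,yg\rangle$ for polynomials $f,g$. Applying $\p_t$ to $\langle P_{n+1}+a_nP_n,\,Q_m\rangle$ and using the product rule gives
\begin{align*}
\langle \p_t(P_{n+1}+a_nP_n),Q_m\rangle=\p_t\langle P_{n+1}+a_nP_n,Q_m\rangle-\langle x(P_{n+1}+a_nP_n),Q_m\rangle-\langle P_{n+1}+a_nP_n,\p_tQ_m+yQ_m\rangle.
\end{align*}
For $m<n$ the first term vanishes by \eqref{ls2}; the crucial cancellation is that the defining property of $a_n$ (from Proposition \ref{rr}) forces $\langle x(P_{n+1}+a_nP_n),Q_m\rangle=-\langle P_{n+1}+a_nP_n,yQ_m\rangle$, and since $\p_tQ_m$ has degree $\le m<n+1$, the term $\langle P_{n+1}+a_nP_n,\p_tQ_m\rangle$ vanishes by orthogonality (as $P_{n+1}+a_nP_n$ is orthogonal to everything of degree $<n$ against the $Q$'s, once one checks the pairing in the right slot). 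Combining these, the $yQ_m$ contributions cancel and $c_{n,m}=0$ for $m<n$.

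Finally, for the diagonal coefficient $m=n$, the same differentiated identity applied to $\langle P_{n+1}+a_nP_n,Q_n\rangle=a_nh_n$ yields $\p_t(a_nh_n)$ on the left, while the $x$- and $y$-shift terms together with $\langle P_{n+1}+a_nP_n,\p_tQ_n\rangle$ must be shown to reorganise into $\langle P_n,Q_n\rangle=h_n$ times the stated constant; dividing by $h_n$ gives $c_{n,n}=\p_t(a_nh_n)/h_n$. The main obstacle I anticipate is precisely this bookkeeping at $m=n$: one must carefully track which shifted pairings survive orthogonality and verify that the $\p_tQ_n$ term does not contribute an extra piece (using that $\p_t Q_n$ is a lower-degree polynomial and that the cross terms generated by the $x,y$ shifts cancel exactly by the choice of $a_n$). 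The off-diagonal cancellation is clean, but the diagonal coefficient requires the most attention to make sure no stray term is dropped or double-counted.
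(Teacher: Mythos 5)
Your proposal is correct and follows essentially the same route as the paper: expand $\p_t(P_{n+1}+a_nP_n)$ in the basis $\{P_m\}_{m=0}^{n}$ and determine the coefficients by differentiating the orthogonality relation $\langle P_{n+1}+a_nP_n,Q_m\rangle=h_{n+1}\delta_{n+1,m}+a_nh_n\delta_{n,m}$, using that $\p_tQ_m$ has degree at most $m-1$ and that the measure-derivative contribution $\langle xf,Q_m\rangle+\langle f,yQ_m\rangle=\int f\,d\mu_1\int Q_m\,d\mu_2$ vanishes for $f=P_{n+1}+a_nP_n$ by the definition of $a_n$. In fact you are more explicit than the paper about this last cancellation (the paper suppresses the shift terms when differentiating the time-dependent inner product), and the diagonal case $m=n$ that you flag as delicate goes through by exactly the same two observations, so no extra bookkeeping is needed.
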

\begin{proof}
Since the polynomials are monic and $\{P_k(x^\tq;t)\}_{k=0}^n$ expand a basis of polynomial at order $n$, we have
\begin{align*}
\p_t\left(
P_{n+1}(x^\tq;t)+a_nP_n(x^\tq;t)
\right)=\sum_{k=0}^n \xi_{n,k}(t)P_k(x^\tq;t).
\end{align*}
Moreover, from the orthogonal relation \eqref{ls2}, we know
\begin{align}\label{te2}
\p_{t}\left(\langle P_{n+1}(x^\tq;t)+a_nP_n(x^\tq;t),Q_m(y^\tw;t)\rangle\right)=\p_t h_{n+1}\delta_{n+1,m}+\p_t(a_nh_n)\delta_{n,m}.
\end{align}
The left hand side can be equivalently expressed as
\begin{align*}
\langle \p_t(P_{n+1}(x^\tq;t)+a_nP_n(x^\tq;t)), Q_m(y^\tw;t)\rangle+\langle P_{n+1}(x^\tq;t)+a_nP_n(x^\tq;t),\p_tQ_m(y^\tw;t)\rangle,
\end{align*}
whose second term is zero if $m<n$. Therefore, one can find
\begin{align*}
\langle \p_t(P_{n+1}(x^\tq;t)+a_nP_n(x^\tq;t)), Q_m(y^\tw;t)\rangle=0\quad \text{if $m<n$},
\end{align*}
to conclude that $\xi_{n,m}=0$ if $m<n$. Moreover, if $m=n$, one can find $\xi_{n,n}=\p_t(a_nh_n)/h_n$ directly from \eqref{te2}, and thus complete the proof.
\end{proof}

Similar with \cite[Prop. 3.2]{li19}, this proposition gives us a time evolution equation for the eigenfunction. However, unlike the method demonstrated therein, it is insufficient to derive an integrable system from the equations \eqref{te} and \eqref{te2} only. The coefficients of $x^{\tq n}$ in $P_{n+1}(x^\tq;t)$ could hardly be expressed as derivatives of tau functions and the equation cannot be easily closed. We need a novel method to derive the integrable lattice. The basic idea is to use the recurrence relation \eqref{rr1} and time evolution equation \eqref{te} and make the use of compatibility condition to formulate the integrable equation.

From the equation \eqref{te} and by the use of recurrence relation with $e_n=\p_{t}(a_nh_n)/h_n$, we can write down
\begin{align}\label{eq1}
\begin{aligned}
x\left[
\p_t( P_{n+1}+a_nP_n)+\frac{e_n}{e_{n-1}}a_{n-1}\p_t\left(
P_n+a_{n-1}P_{n-1}
\right)\right]
=e_n\sum_{\al=n-k_2-1}^{n+k_1}\eta_{n-1,\al}P_\al.
\end{aligned}
\end{align}
Moreover, if we denote $f_n:=e_na_{n-1}/e_{n-1}$, then the left hand side of the above equation can also be written as
\begin{align}\label{eq2}
\begin{aligned}
\p_t& [x(P_{n+1}+a_nP_n)]+f_n\p_t[x(P_n+a_{n-1}P_{n-1})]\\
&=\p_t\left(
\sum_{\al=n-k_2}^{n+k_1+1}\eta_{n,\al}P_\al
\right)+f_n\p_t\left(
\sum_{\al=n-k_2-1}^{n+k_1}\eta_{n-1,\al}P_\al
\right)\\
&=\sum_{\al=n-k_2}^{n+k_1}\left(
\p_t\eta_{n,\al}+f_n\p_t \eta_{n-1,\al}\right)P_\al+f_n\p_t\eta_{n-1,n-k_2-1}P_{n-k_2-1}\\
&\quad+\underline{\p_t{P_{n+k_1+1}}+\sum_{\al=n-k_2}^{n+k_1}(\eta_{n,\al}+f_n\eta_{n-1,\al})\p_tP_\al+f_n\eta_{n-1,n-k_2-1}\p_t P_{n-k_2-1}}.
\end{aligned}
\end{align}
To see the compatibility condition of the equations \eqref{eq1} and \eqref{eq2}, we'd like to express the underlined term
as a linear combination of basis $\{P_\al(x^\tq;t)\}_{\al=0}^{n+k_1}$. Notice that there exist indefinite parameters $\xi_{n-k_2},\cdots,\xi_{n+k_1}$ (where we assume $\xi_{n+k_1+1}=1$) such that
\begin{align*}
\sum_{\al=n-k_2}^{n+k_1+1}\xi_\al\p_t(P_\al+a_{\al-1}P_{\al-1})&=\sum_{\al=n-k_2}^{n+k_1+1}(\xi_\al\p_t a_{\al-1})P_{\al-1}\\
&+\p_t P_{n+k_1+1}+\sum_{\al=n-k_2}^{n+k_1}(\xi_\al+\xi_{\al+1}a_\al)\p_t P_\al+\xi_{n-k_2}a_{n-k_2-1}\p_t P_{n-k_2-1}.
\end{align*}
By using equation \eqref{te}, we know $$\sum_{\al=n-k_2}^{n+k_1+1}\xi_\al\p_t(P_\al+a_{\al-1}P_{\al-1})=\sum_{\al=n-k_2}^{n+k_1+1}\xi_\al e_{\al-1}P_{\al-1}.$$
Therefore, a combination of the above two equations leads us to
\begin{align*}
\p_t P_{n+k_1+1}+\sum_{\al=n-k_2}^{n+k_1}(\xi_\al+\xi_{\al+1}a_\al)\p_t P_\al+\xi_{n-k_2}a_{n-k_2-1}\p_t P_{n-k_2-1}
=\sum_{\al=n-k_2}^{n+k_1+1}\xi_\al(e_{\al-1}-\p_t a_{\al-1})P_{\al-1}.
\end{align*}
Moreover, if we assume that $\xi_{n-k_2},\,\cdots,\,\xi_{n+k_1}$ satisfy
\begin{align*}
\xi_\al+\xi_{\al+1}a_\al=\eta_{n,\al}+f_n\eta_{n-1,\al},\quad \text{for $\al=n-k_2,\,\cdots,\,n+k_1$},
\end{align*}
then the underlined term in \eqref{eq2} can be expressed as
\begin{align*}
\sum_{\al=n-k_2}^{n+k_1+1}\xi_\al(e_{\al-1}-\p_t a_{\al-1})P_{\al-1}+(f_n\eta_{n-1,n-k_2-1}-\xi_{n-k_2}a_{n-k_2-1})\p_t P_{n-k_2-1}.
\end{align*}
Hence, we can rewrite the equation \eqref{eq2} as
\begin{align}\label{eq3}
\begin{aligned}
\p_t x(P_{n+1}+a_nP_n)&+f_n\p_tx(P_n+a_{n-1}P_{n-1})\\
&=\sum_{\al=n-k_2}^{n+k_1}\left(\p_t \eta_{n,\al}+f_n\p_t \eta_{n-1,\al}+\xi_{\al+1}(e_\al-\p_t a_\al)\right)P_\al\\
&\quad+\left(f_n\p_t \eta_{n-1,n-k_2-1}+\xi_{n-k_2}(e_{n-k_2-1}-\p_t a_{n-k_2-1})\right)P_{n-k_2-1}\\
&\quad +(f_n\eta_{n-1,n-k_2-1}-\xi_{n-k_2}a_{n-k_2-1})\p_t P_{n-k_2-1},
\end{aligned}
\end{align}
where the last term is a linear combination of $\{P_\al(x^\tq;t)\}_{\al=0}^{n-k_2-2}$. Therefore, according to the independent of the basis, one can finally arrive at the compatibility condition
\begin{align}\label{gct}
\left\{
\begin{aligned}
&e_n\eta_{n-1,\al}=\p_t \eta_{n,\al}+f_n\p_t \eta_{n-1,\al}+\xi_{\al+1}(e_\al-\p_t a_\al),\quad \al=n-k_2,\cdots,n+k_1,\\
&e_n\eta_{n-1,n-k_2-1}=f_n\p_t\eta_{n-1,n-k_2-1}+\xi_{n-k_2}(e_{n-k_2-1}-\p_t a_{n-k_2-1}),\\
&0=f_n\eta_{n-1,n-k_2-1}-\xi_{n-k_2}a_{n-k_2-1},
\end{aligned}
\right.
\end{align}
where $\{\xi_{\al},\,\al=n-k_2,\,\cdots,\,n+k_1\}$ satisfy the linear system $\xi_\al+\xi_{\al+1}a_\al=\eta_{n,\al}+f_n\eta_{n-1,\al}$ with $\xi_{n+k_1+1}=1$.
The equations for $P_n(x^\tq;t)$ are not closed in this case. One should consider dual equations for $Q_n(y^\tw;t)$. By making the use of the spectral problem \eqref{rr2} and time evolution
\begin{align*}
\partial_t\left(
Q_{n+1}(y^{\tw};t)+\hat{a}_nQ_n(y^\tw;t)
\right)=\frac{\partial_t(\hat{a}_nh_n)}{h_n}:=\hat{e}_nQ_n(y^\tw;t),
\end{align*}
one can get a dual equation for \eqref{gct} by changing $a\mapsto \hat{a}$, $e\mapsto \hat{e}$, $f\mapsto \hat{f}$, $\xi\mapsto \hat{\xi}$ and $\eta\mapsto \hat{\eta}$. After combining the equations held by ${P_n(x^\tq;t)}_{n=0}^\infty$ and $\{Q_n(y^\tw;t)\}_{n=0}^\infty$, the lattice equations are closed and an example is illustrated in the following subsection. 

\begin{remark} 
Even in the case $\tq=\tw=1$, if $d\mu_1\not= d\mu_2$ (i.e. the moments are not symmetric), the time-dependent partition function of the original Cauchy two-matrix model doesn't satisfy the CKP hierarchy\footnote{One can follow the procedure exhibited in \cite{wang10} and take the Gram determinant into the first equation of CKP equation (i.e. \cite[Eq. (3)]{wang10}). It is not difficult to see the asymmetric Gram determinant doesn't satisfy the bilinear equation of CKP equation.}.
However, it is interesting to see, by the use of the Andrei\'ef formula, the partition function can be written as
\begin{align*}
\tau_n&=\frac{1}{(n!)^2}\int_{\mathbb{R}_+^n\times \mathbb{R}_+^n}\det\left[\frac{1}{x_j+y_k}
\right]_{j,k=1}^n\Delta_n(x)\Delta_n(y)\prod_{j=1}^n d\mu_1(x_j)d\mu_2(y_j)\\&=\det\left[
\int_{\mathbb{R}_+^2}\frac{x^{j-1}y^{k-1}}{x+y}\du_1(x;t)\du_2(y;t)
\right]_{j,k=1}^n:=\det\left[
I_{j,k}
\right]_{j,k=1}^n,
\end{align*}
and the moments $I_{j,k}$ are defined by
\begin{align*}
I_{j,k}=\int_{-\infty}^t \phi_j(t)\psi_k(t)dt, \quad \text{where $\phi_j(t)=\int_{\mathbb{R}_+}x^{j-1}d\mu_1(x;t)$ and $\psi_k(t)=\int_{\mathbb{R}_+}y^{k-1}d\mu_2(y;t)$.}
\end{align*}
The determinant with this kind of moment is usually called as the Gram determinant in the soliton theory; please see \cite[\S 2]{hirota04} for details. Moreover, if the time-dependent measures are taken as $d\mu_i(x;t)=\sum_{k=1}^\infty t_{2k+1}x^{2k+1}d\mu_i(x)$ with $i=1,\,2$, then the Gram determinant satisfies the KP hierarchy with odd flows, as shown in \cite[\S 3.2]{hirota04}. Therefore, we can tell that even though the asymmetric moments are not related to the CKP hierarchy any more, there is another integrable hierarchy behind this random matrix model, which is a special case of KP hierarchy with odd flows only. Based on these facts, we would like to demonstrate the first non-trivial asymmetric case and show its exact solvability.
\end{remark}

\subsection{The first non-trivial asymmetric tau function and integrable lattice}
In this part, we'd like to consider the first nontrivial asymmetric tau function, i.e. $\theta_1=\theta_2=1$ case with $d\mu_1\not=d\mu_2$. In this setting, we can define the moments
\begin{align*}
m_{i,j}=\langle x^i, y^j\rangle=\int_{\mathbb{R}_+\times\mathbb{R}_+}\frac{x^iy^j}{x+y}d\mu_1(x;t)d\mu_2(y;t)
\end{align*}
and the corresponding $\tau$-function 
\begin{align}\label{tau1}
\tau_n=\det(m_{j,k})_{j,k=0}^{n-1}.
\end{align}
Similar to \eqref{ops}, one can define the asymmetric Cauchy bi-orthogonal polynomials $\{P_n(x;t)\}_{n=0}^\infty$ and $\{Q_n(y;t)\}_{n=0}^\infty$ such that 
\begin{align*}
\text{
$\langle P_n(x;t), Q_m(y;t)\rangle=h_n\delta_{n,m}$ with $h_n=\tau_{n+1}/\tau_n$}.
\end{align*}
Furthermore, these polynomials admit the following properties.
\begin{proposition}
 $P_n(x;t)$ and $Q_n(y;t)$ satisfy the recurrence relations
\begin{align*}
x(P_{n+1}(x;t)+a_nP_n(x;t))&=P_{n+2}(x;t)+b_nP_{n+1}(x;t)+c_nP_{n}(x;t)+d_nP_{n-1}(x;t),\\
y(Q_{n+1}(y;t)+\hat{a}_nQ_n(y;t))&=Q_{n+2}(y;t)+\hat{b}_nQ_{n+1}(y;t)+\hat{c}_nQ_n(y;t)+\hat{d}_nQ_{n-1}(y;t),
\end{align*}
where the coefficients are given by
\begin{align}\label{changeofvariable}
\begin{aligned}
&a_n=\hat C_n,\, b_n=\hat C_n+B_{n+1},\, c_n=-A_{n+1}-\hat C_n\hat B_n,\, d_n=-A_n\hat C_n,\\
&\hat{a}_n= C_n,\,\hat{b}_n={C}_n+\hat B_{n+1},\,\hat{c}_n=-A_{n+1}- C_n B_n,\,\hat{d}_n=-A_n C_n,
\end{aligned}
\end{align}
with 
\begin{align}\label{change}
A_n=\frac{\tau_{n+1}\tau_{n-1}}{\tau_n^2},\, B_n=\frac{\xi_{n+1}}{\tau_{n+1}}-\frac{\xi_n}{\tau_n},\, \hat B_n=\frac{\hat{\xi}_{n+1}}{\tau_{n+1}}-\frac{\hat{\xi}_n}{
\tau_n},\, C_n=-\frac{{{{\sigma}}}_{n+1}\tau_n}{{\sigma}_n\tau_{n+1}},\, \hat C_n=-\frac{{\hat{\sigma}}_{n+1}\tau_n}{{\hat{\sigma}}_n\tau_{n+1}}
\end{align}
and the functions $\sigma$ and $\xi$ (resp. $\hat{\sigma}$ and $\hat{\xi}$) have the determinant expressions
\begin{align}\label{taufunctions}
\begin{aligned}
\sigma_n&=\det\left(\begin{array}{c}
m_{i,j}\\
\phi_j\end{array}
\right)_{\substack{i=0,\cdots,n-1\\j=0,\cdots,n}},\quad
\xi_n=\det\left(
m_{i,j}
\right)_{\substack{i=0,\cdots,n-2,n\\j=0,\cdots,n-1}},
\\
 \hat{\sigma}_n&=\det\left(
m_{i,j}\, \hat{\phi}_i
\right)_{\substack{i=0,\cdots,n \\j=0,\cdots,n-1}},\quad \hat{\xi}_n=\det\left(
m_{i,j}
\right)_{\substack{i=0,\cdots,n-1\\j=0,\cdots,n-2,n}}.
\end{aligned}
\end{align}
The single moments $\phi_j$ and $\hat{\phi}_j$ are defined by $\hat{\phi}_j=\int_{\mathbb{R}_+}x^jd\mu_1(x;t)$ and ${\phi}_j=\int_{\mathbb{R}_+}y^jd\mu_2(y;t)$.
\end{proposition}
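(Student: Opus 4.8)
The plan is to split the statement into two tasks: showing that four-term recurrences of the stated shape exist at all, and then identifying each coefficient with the determinantal ratios in \eqref{changeofvariable}--\eqref{change}. Existence is immediate: setting $k_1=k_2=1$ in Proposition \ref{rr} collapses the sum in \eqref{rr1} to $P_{n+2}+b_nP_{n+1}+c_nP_n+d_nP_{n-1}$, with $b_n=\eta_{n,n+1}$, $c_n=\eta_{n,n}$, $d_n=\eta_{n,n-1}$ and the leading term normalised by monicity; the dual relation for $Q_n$ comes from the Corollary. The content is the evaluation of the coefficients, for which I would use two elementary tools. The first is cofactor expansion of the determinant representation of $P_n,Q_n$ (the $\tq=\tw=1$ case of \eqref{ops}): reading off the coefficient of $x^{n-1}$ in $P_n$ identifies the subleading coefficient as $-\xi_n/\tau_n$, and that of $y^{n-1}$ in $Q_n$ as $-\hat\xi_n/\tau_n$, where $\xi_n,\hat\xi_n$ are exactly the determinants in \eqref{taufunctions}. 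The second is the integration-by-parts identity
\begin{align*}
\langle xf(x),g(y)\rangle+\langle f(x),yg(y)\rangle=\left(\int_{\mathbb{R}_+}f\,\du_1\right)\left(\int_{\mathbb{R}_+}g\,\du_2\right),
\end{align*}
obtained by cancelling $x+y$ in the kernel of \eqref{moments}; this is the mechanism already behind the vanishing argument in the proof of Proposition \ref{rr}.

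Next I would complete the dictionary between single integrals and determinants. Expanding $\int P_n\,\du_1$ and $\int Q_n\,\du_2$ by linearity and recognising the resulting bordered determinants gives $\int_{\mathbb{R}_+}P_n\,\du_1=\hat\sigma_n/\tau_n$ and $\int_{\mathbb{R}_+}Q_n\,\du_2=\sigma_n/\tau_n$, with $\sigma_n,\hat\sigma_n$ as in \eqref{taufunctions} (note $\hat\phi$ carries $\du_1$ and $\phi$ carries $\du_2$). Feeding these into the definitions of $a_n$ and $\hat a_n$ from Proposition \ref{rr} yields at once $a_n=-\hat\sigma_{n+1}\tau_n/(\hat\sigma_n\tau_{n+1})=\hat C_n$ and $\hat a_n=C_n$, matching \eqref{changeofvariable}. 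The two highest coefficients $b_n,\hat b_n$ are then purely algebraic: comparing the coefficients of $x^{n+1}$ on the two sides of the recurrence and substituting $P_m=x^m-(\xi_m/\tau_m)x^{m-1}+\cdots$ gives $b_n=a_n+\xi_{n+2}/\tau_{n+2}-\xi_{n+1}/\tau_{n+1}=\hat C_n+B_{n+1}$, the last equality recognising \eqref{change}; the dual yields $\hat b_n=C_n+\hat B_{n+1}$.

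For the two lower coefficients I would switch to the inner-product formula $\eta_{n,\al}=\langle x(P_{n+1}+a_nP_n),Q_\al\rangle/h_\al$. Since $a_n$ is tuned so that $\int(P_{n+1}+a_nP_n)\,\du_1=0$, the integration-by-parts identity converts this to $-\langle P_{n+1}+a_nP_n,yQ_\al\rangle/h_\al$. Taking $\al=n-1$, the polynomial $yQ_{n-1}=Q_n+(\text{lower})$ is orthogonal to $P_{n+1}$ and pairs with $P_n$ only through its $Q_n$-component, so $d_n=-a_nh_n/h_{n-1}=-\hat C_nA_n$ after using $h_n/h_{n-1}=\tau_{n+1}\tau_{n-1}/\tau_n^2=A_n$. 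Taking $\al=n$, I would expand $yQ_n=Q_{n+1}+\hat B_nQ_n+(\text{lower})$, where matching the $y^n$-coefficients and using the subleading coefficients $-\hat\xi_m/\tau_m$ identifies the middle coefficient with $\hat B_n$; the pairing then gives $c_n=-h_{n+1}/h_n-a_n\hat B_n=-A_{n+1}-\hat C_n\hat B_n$, using $h_{n+1}/h_n=A_{n+1}$. Finally $\hat c_n,\hat d_n$ follow from the symmetry $x\leftrightarrow y$, $P\leftrightarrow Q$, $\du_1\leftrightarrow\du_2$, which exchanges $\sigma\leftrightarrow\hat\sigma$, $\xi\leftrightarrow\hat\xi$, $B\leftrightarrow\hat B$, $C\leftrightarrow\hat C$ and fixes $A_n$.

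The main obstacle I anticipate is bookkeeping rather than conceptual. One must correctly pin down the signs and the skipped row/column in the cofactor expansions so that the subleading coefficients land on $\xi_n,\hat\xi_n$ exactly as written in \eqref{taufunctions}, and one must keep straight which measure each single integral $\phi_j,\hat\phi_j$ carries so that $\sigma_n,\hat\sigma_n$ attach to the correct side. Once these identifications are fixed, each coefficient falls out of a one- or two-line computation.
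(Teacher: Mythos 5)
Your proposal is correct, and it follows exactly the route the paper intends: the paper omits the proof, stating only that the proposition is ``a direct consequence'' of the general recurrence (Proposition 2.1 with $k_1=k_2=1$), and your argument supplies precisely the missing bookkeeping --- the cofactor expansions identifying the subleading coefficients with $-\xi_n/\tau_n$, $-\hat\xi_n/\tau_n$ and the single integrals with $\hat\sigma_n/\tau_n$, $\sigma_n/\tau_n$, followed by the inner-product evaluation of $\eta_{n,\alpha}$ via the kernel identity $\langle xf,g\rangle+\langle f,yg\rangle=\bigl(\int f\,d\mu_1\bigr)\bigl(\int g\,d\mu_2\bigr)$. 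All the coefficient identifications ($a_n=\hat C_n$, $b_n=\hat C_n+B_{n+1}$, $c_n=-A_{n+1}-\hat C_n\hat B_n$, $d_n=-A_n\hat C_n$, and their duals by the $x\leftrightarrow y$ symmetry) check out against the stated formulae.
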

This proposition is a direct consequence of Prop. \ref{rr} and we omit its proof here. Moreover, it follows from the derivative formula in Prop. \ref{dert1} and we have the following proposition.
\begin{proposition}\label{prop_evo}  $P_n(x;t)$ and $Q_n(y;t)$ evolve as
\begin{align*}
\partial_t 
P_{n+1}(x;t)+\hat C_n\partial_t P_n(x;t)&=\hat C_n(B_n+\hat B_n)P_n(x;t),\\ \partial_t
Q_{n+1}(y;t)+{C}_n\partial_t Q_n(y;t)&= C_n(B_n+\hat B_n)Q_n(y;t).
\end{align*}
\end{proposition}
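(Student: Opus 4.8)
The plan is to deduce both evolution equations from the derivative formula of Proposition \ref{dert1}, specialised to the case $\tq=\tw=1$, in which \eqref{changeofvariable} gives $a_n=\hat C_n$ and $\hat a_n=C_n$. First I would write Proposition \ref{dert1} in the form
\begin{align*}
\p_t\bigl(P_{n+1}(x;t)+\hat C_nP_n(x;t)\bigr)=\frac{\p_t(\hat C_nh_n)}{h_n}\,P_n(x;t),
\end{align*}
expand the left-hand side by the product rule, and move the term $(\p_t\hat C_n)P_n(x;t)$ to the right. Using the elementary cancellation $\p_t(\hat C_nh_n)/h_n-\p_t\hat C_n=\hat C_n\,\p_th_n/h_n$, this yields exactly
\begin{align*}
\p_tP_{n+1}(x;t)+\hat C_n\p_tP_n(x;t)=\hat C_n\,\frac{\p_th_n}{h_n}\,P_n(x;t).
\end{align*}
Thus the whole proposition reduces to the single scalar identity $\p_th_n/h_n=B_n+\hat B_n$; the equation for $Q_n$ then follows verbatim from the dual derivative formula (with $\hat a_n=C_n$), since it produces the same logarithmic derivative $\p_th_n/h_n$.

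The core of the work is therefore this logarithmic-derivative identity, which I would route through the tau functions. From $h_n=\tau_{n+1}/\tau_n$ and the definitions \eqref{change} of $B_n$ and $\hat B_n$, one has $\p_th_n/h_n=\p_t\log\tau_{n+1}-\p_t\log\tau_n$, so it suffices to prove
\begin{align*}
\p_t\tau_n=\xi_n+\hat\xi_n.
\end{align*}
The key analytic input is the flow \eqref{td}: since $\p_t\bigl[\du_1(x;t)\du_2(y;t)\bigr]=(x+y)\,\du_1(x;t)\du_2(y;t)$, the Cauchy kernel cancels and the moments obey the clean recursion
\begin{align*}
\p_tm_{i,j}=\int_{\mathbb{R}_+^2}x^iy^j\,\du_1(x;t)\du_2(y;t)=m_{i+1,j}+m_{i,j+1}.
\end{align*}

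I would then differentiate $\tau_n=\det(m_{i,j})_{i,j=0}^{n-1}$ entrywise and substitute this recursion, splitting $\p_t\tau_n=\sum_{i,j}\mathrm{cof}_{i,j}\,m_{i+1,j}+\sum_{i,j}\mathrm{cof}_{i,j}\,m_{i,j+1}$, where $\mathrm{cof}_{i,j}$ is the $(i,j)$ cofactor of $\tau_n$. Grouping the first sum by $i$ through the row expansion, each inner sum is the determinant obtained by overwriting row $i$ with the original row $i+1$; for $i\le n-2$ this creates two equal rows and vanishes, so only the boundary term $i=n-1$ survives, replacing the last row by $m_{n,\cdot}$ and yielding precisely $\xi_n$ from \eqref{taufunctions}. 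Grouping the second sum by $j$ through the column expansion, the same mechanism kills every interior term and leaves only the replacement of the last column by $m_{\cdot,n}$, which is $\hat\xi_n$. Hence $\p_t\tau_n=\xi_n+\hat\xi_n$, closing the argument.

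The step I expect to be the main obstacle is this cofactor bookkeeping: one must check that the interior terms genuinely cancel by the repeated-row and repeated-column mechanism, and that the two surviving boundary determinants match the nonstandard index sets $\{0,\dots,n-2,n\}$ that define $\xi_n$ and $\hat\xi_n$ in \eqref{taufunctions}. Everything preceding it is formal manipulation of Proposition \ref{dert1}, and once $\p_t\tau_n=\xi_n+\hat\xi_n$ is established no further computation is needed for the $Q_n$ equation, since both polynomial families share the same $\tau_n$ and hence the same $\p_th_n/h_n$.
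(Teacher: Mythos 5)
Your proposal is correct and follows essentially the same route as the paper: both reduce the proposition via Proposition \ref{dert1} (with $a_n=\hat C_n$, $\hat a_n=C_n$) to the single identity $\partial_t\log h_n=B_n+\hat B_n$, equivalently $\partial_t\tau_n=\xi_n+\hat\xi_n$. The only difference is that the paper dispatches this last identity by citing Lemma 3.2 of \cite{chang18}, whereas you prove it directly from $\partial_t m_{i,j}=m_{i+1,j}+m_{i,j+1}$ and the row/column cofactor expansion --- a computation that checks out and matches the index sets defining $\xi_n$ and $\hat\xi_n$ in \eqref{taufunctions}.
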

\begin{proof} Compared with Prop. \ref{dert1}, we are left to prove that 
$$\partial_t \log h_n=B_n+\hat B_n,$$
which is equivalent to 
$$\partial_t \tau_n=\xi_n+\hat \xi_n.$$ This can be immediately achieved by following Lemma 3.2 in \cite{chang18}.
\end{proof}
Following the procedure demonstrated above, one can derive the following nonlinear lattice system from the compatibility condition
\begin{subequations}
\begin{align}
&\partial_t A_n=A_n(B_n+\hat B_n-B_{n-1}-\hat B_{n-1})\label{eq_non},\\
&\partial_t B_n=(B_{n-1}+\hat B_{n-1})\hat C_{n-1}-(B_{n}+\hat B_{n})\hat C_n,\\
&\partial_t \hat B_n=(B_{n-1}+\hat B_{n-1})C_{n-1}-(B_{n}+\hat B_{n}) C_n,\\
&\partial_t C_n=C_n\left(C_n-\frac{A_n}{C_{n-1}}-\hat B_{n}-C_{n+1}+\frac{A_{n+1}}{C_n}+\hat B_{n+1}\right),\\
&\partial_t \hat C_n=\hat C_n\left(\hat C_n-\frac{A_n}{\hat C_{n-1}}- B_{n}-\hat C_{n+1}+\frac{A_{n+1}}{\hat C_n}+ B_{n+1}\right).\label{eq_non2}
\end{align}
\end{subequations}
We need to emphasise that although the recurrence relations \eqref{changeofvariable} seem to admit eight coefficients, in fact, they do have only five independent coefficients expressed by $A_n$, $B_n$, $\hat{B}_n$, $C_n$ and $\hat{C}_n$. It means that the equations held by these recurrence coefficients should be only five equations rather than eight equations; the other three are automatically compatible. To summarise, we have the following proposition.
\begin{theorem}\label{prop2}
The system \eqref{eq_non}-\eqref{eq_non2} admit the matrix integrals solution
\begin{align*}
&\tau_n=\frac{1}{(n!)^2}\int_{\mathbb{R}^n_+\times\mathbb{R}^n_+}\frac{\Delta^2_n(x)\Delta^2_n(y)}{\prod_{j,k=1}^n(x_j+y_k)}\prod_{j=1}^nd\mu_1(x_j;t)d\mu_2(y_j;t);\\
&\hat{\sigma}_n=\frac{1}{n!(n+1)!}\int_{\mathbb{R}_+^{n+1}\times\mathbb{R}_+^n}\frac{\Delta^2_{n+1}(x)\Delta^2_n(y)}{\prod_{j=1}^{
n+1}\prod_{k=1}^n(x_j+y_k)}\prod_{j=1}^{n+1}\prod_{k=1}^nd\mu_1(x_j;t)d\mu_2(y_k;t);\\
&{\sigma}_n=\frac{1}{n!(n+1)!}\int_{\mathbb{R}_+^{n}\times\mathbb{R}_+^{n+1}}\frac{\Delta^2_{n}(x)\Delta^2_{n+1}(y)}{\prod_{j=1}^{
n}\prod_{k=1}^{n+1}(x_j+y_k)}\prod_{j=1}^{n}\prod_{k=1}^{n+1}d\mu_1(x_j;t)d\mu_2(y_k;t);\\
&\hat{\xi}_n=\frac{1}{(n!)^2}\int_{\mathbb{R}_+^n\times\mathbb{R}_+^n}\frac{\Delta_n^2(x)\Delta_n^2(y)}{\prod_{j,k=1}^n(x_j+y_k)}\sum_{j=1}^ny_j\prod_{j=1}^nd\mu_1(x_j;t)d\mu_2(y_j;t);\\
&{\xi}_n=\frac{1}{(n!)^2}\int_{\mathbb{R}_+^n\times\mathbb{R}_+^n}\frac{\Delta_n^2(x)\Delta_n^2(y)}{\prod_{j,k=1}^n(x_j+y_k)}\sum_{j=1}^nx_j\prod_{j=1}^nd\mu_1(x_j;t)d\mu_2(y_j;t)
\end{align*}
with the variable transformation \eqref{change} and the measures satisfy \eqref{td}.\end{theorem}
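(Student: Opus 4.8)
The plan is to split the statement into two logically independent tasks. The first is to show that each of the five multidimensional matrix integrals equals the corresponding moment-determinant in \eqref{tau1} and \eqref{taufunctions}; the second is to conclude that these objects then solve the lattice \eqref{eq_non}-\eqref{eq_non2}, which I expect to be essentially free once the first task is done. So the substance of the proof lies entirely in a collection of Gram/Heine-type determinant identities.

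For the first task I would use a double application of Andr\'eief's formula together with the Cauchy determinant identity $\det[1/(x_j+y_k)]_{j,k=1}^n=\Delta_n(x)\Delta_n(y)/\prod_{j,k}(x_j+y_k)$. Applied to the Gram determinant $\tau_n=\det(m_{j,k})$, the generalised Andr\'eief formula turns it into $\frac{1}{(n!)^2}\int\det[1/(x_j+y_k)]\,\Delta_n(x)\Delta_n(y)\prod d\mu_1(x_j)d\mu_2(y_j)$, and the Cauchy identity then produces exactly the stated $\tau_n$-integral (this is the computation already quoted in the Remark). For $\xi_n$ and $\hat\xi_n$ the only modification is that one row (resp.\ column) carries the bumped power set $\{0,\dots,n-2,n\}$, so that the corresponding Vandermonde-type determinant $\det(x_i^{c_j})$ factorises as $e_1(x)\Delta_n(x)=(\sum_j x_j)\Delta_n(x)$, the Schur function for $\lambda=(1)$; this is precisely what inserts the factor $\sum_j x_j$ (resp.\ $\sum_j y_j$) and yields the claimed integrals, with the same prefactor $1/(n!)^2$.

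The heart of the argument, and the step I expect to be the main obstacle, is $\sigma_n$ and $\hat\sigma_n$, whose defining determinants mix the double moments $m_{i,j}$ with the single moments $\phi_j=\int y^j d\mu_2$ (resp.\ $\hat\phi_i=\int x^i d\mu_1$). Here I would establish a rectangular version of Andr\'eief's identity: the lone $\phi$-row supplies a $y$-variable that is \emph{not} Cauchy-paired with any $x$, so that expanding the $(n+1)\times(n+1)$ determinant and antisymmetrising over the $n+1$ $y$-variables produces the extra Vandermonde $\Delta_{n+1}(y)$, the rectangular kernel $\prod_{j=1}^n\prod_{k=1}^{n+1}(x_j+y_k)^{-1}$, and the prefactor $1/(n!\,(n+1)!)$. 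The mechanism is already transparent at $n=1$: $\sigma_1=m_{0,0}\phi_1-m_{0,1}\phi_0=\int\frac{y_2-y_1}{x_1+y_1}d\mu_1(x_1)d\mu_2(y_1)d\mu_2(y_2)$, and symmetrising the integrand in $y_1,y_2$ replaces $\frac{y_2-y_1}{x_1+y_1}$ by $\frac{(y_2-y_1)^2}{2(x_1+y_1)(x_1+y_2)}$, which is exactly the claimed $\sigma_1$-integral. The case $\hat\sigma_n$ is identical under $x\leftrightarrow y$, $\mu_1\leftrightarrow\mu_2$, the extra variable now being an $x$. Pinning down the correct combinatorial constants in this rectangular Cauchy--Binet expansion is the delicate point.

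Once the integral representations are secured, the second task is immediate. By construction $\tau_n,\sigma_n,\xi_n,\hat\sigma_n,\hat\xi_n$ are the tau functions of the asymmetric Cauchy bi-orthogonal polynomials, so the quantities $A_n,B_n,\hat B_n,C_n,\hat C_n$ defined through \eqref{change} are precisely the recurrence coefficients in \eqref{changeofvariable}, whose time evolution is governed by Proposition \ref{prop_evo}; since the lattice \eqref{eq_non}-\eqref{eq_non2} was derived exactly as the compatibility condition of the recurrence relations and the evolution \eqref{te}, it is automatically satisfied. As a corroborating check I would differentiate the $\tau_n$-integral directly: by \eqref{td} one has $\p_t\prod_j d\mu_1(x_j;t)d\mu_2(y_j;t)=\sum_j(x_j+y_j)\prod_j d\mu_1(x_j;t)d\mu_2(y_j;t)$, whence $\p_t\tau_n=\xi_n+\hat\xi_n$ reading off the integral forms of $\xi_n$ and $\hat\xi_n$; this reproduces the derivative formula underlying Proposition \ref{prop_evo} and confirms that the time dependence of the integrals is compatible with \eqref{td}.
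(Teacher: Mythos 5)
Your proposal is correct and follows essentially the same route as the paper: Andr\'eief's formula combined with the Cauchy determinant identity, with the bumped Vandermonde $\det(x_j^{c_i})$ for $c=(0,\dots,n-2,n)$ factoring as $(\sum_j x_j)\Delta_n(x)$ to handle $\xi_n$ and $\hat\xi_n$, and with the lattice equations then holding automatically because they were derived as compatibility conditions for these very tau functions. The only difference is that where you work out the rectangular (bordered-Cauchy) case for $\sigma_n$, $\hat\sigma_n$ yourself, the paper simply cites Bertola--Gekhtman--Szmigielski for those two integrals and only details $\xi_n$; your $n=1$ check and the stated mechanism are consistent with that reference.
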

 \begin{proof} 
 Firstly noted that the tau functions admit the determinant formulae \eqref{tau1} and \eqref{taufunctions}. The rest part of the theorem can be proved by using the Andr\'eief formula \cite{andreief86,forrester18}.
Here we only give a brief proof of $\xi_n$ ($\hat{\xi}_n$ can be proved in a similar way) and the derivations of $\tau_n$ and $\sigma_n$ (resp. $\hat{\sigma}_n$) can be found in \cite[Eq. (2.5), (3.9)]{bertola10}.

 Since 
 \begin{align*}
 \xi_n&=\det\left(
m_{i,j}
\right)_{\substack{i=0,\cdots,n-2,n\\j=0,\cdots,n-1}}=\det\left(
\int_{\mathbb{R}_+\times\mathbb{R}_+}\frac{x^iy^j}{x+y}d\mu_1(x;t)d\mu_2(y;t)
\right)_{\substack{i=0,\cdots,n-2,n\\j=0,\cdots,n-1}}\\
&=\frac{1}{(n!)^2}\int_{\mathbb{R}_+^n\times\mathbb{R}_+^n}\det\left(
\phi_i(x_j)
\right)_{\substack{i=0,\cdots,n-2,n\\j=0,\cdots,n-1}}\det\left(
\phi_i(y_j)
\right)_{i,j=0}^{n-1}\det\left(
\frac{1}{x_i+y_j}
\right)_{i,j=0}^{n-1}\prod_{i,j=0}^{n-1}d\mu_1(x_i;t)d\mu_2(y_j;t)
\end{align*}
with $\phi_i(x)=x^i$, the remaining part is to compute the Vandermonde-type determinants. By showing that
\begin{align*}
\det\left(
\phi_i(x_j)
\right)_{\substack{i=0,\cdots,n-2,n\\j=0,\cdots,n-1}}=(\sum_{j=1}^n x_j)\Delta_n(x),
\end{align*}
the proof is completed.
\end{proof}

Despite the nonlinear form, we would like to derive its bilinear form with the help of dependent variable transformations \eqref{change}.
\begin{proposition}
The tau-functions satisfy the following bilinear form 
\begin{subequations}
\begin{align}
&D_t\tau_{n+1}\cdot\tau_{n}=\sigma_n\hat{\sigma}_n,\label{lattice1}\\
&D_t\xi_n\cdot\tau_n=\hat{\sigma}_n\sigma_{n-1},\label{lattice2}\\ 
&D_t\hat{\xi}_n\cdot\tau_n=\sigma_n\hat{\sigma}_{n-1},\label{lattice3}\\
&D_t\xi_{n+1}\cdot \tau_n+D_t\tau_{n+1}\cdot \hat{\xi}_n=\sigma_n\partial_t\hat{\sigma}_n,\label{lattice4}
\\&D_t\hat{\xi}_{n+1}\cdot\tau_n+D_t\tau_{n+1}\cdot\xi_n=\hat{\sigma}_n\partial_t\sigma_n,\label{lattice5}
\end{align}
\end{subequations}
\end{proposition}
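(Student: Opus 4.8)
The plan is to run the whole argument through the determinantal representations \eqref{tau1}, \eqref{taufunctions} and the elementary time-evolution of the moments. First I would record the evolution rules forced by \eqref{td}: since $\partial_t(d\mu_1 d\mu_2)=(x+y)d\mu_1 d\mu_2$, the Cauchy kernel cancels and one obtains the rank-one law $\partial_t m_{i,j}=\hat{\phi}_i\phi_j$, together with the shifts $\partial_t\phi_j=\phi_{j+1}$ and $\partial_t\hat{\phi}_i=\hat{\phi}_{i+1}$. Differentiating a determinant of moments and feeding in the rank-one law collapses each time derivative into a single bordered determinant: $\partial_t\tau_n$ equals the $n\times n$ moment block double-bordered by the column $(\hat{\phi}_i)$ and the row $(\phi_j)$ with a zero corner (equivalently $\partial_t\tau_n=\xi_n+\hat{\xi}_n$, already used in the proof of Proposition \ref{prop_evo}), and similarly for $\partial_t\xi_n$ and $\partial_t\hat{\xi}_n$. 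For the already bordered $\sigma_n,\hat{\sigma}_n$ there is a further collapse: differentiating any moment column of $\hat{\sigma}_n$ produces a column proportional to its own $\hat{\phi}$-border, so that determinant vanishes and only the border survives; hence $\partial_t\hat{\sigma}_n$ is the single determinant obtained by replacing the $\hat{\phi}$-column with the shifted column $(\hat{\phi}_{i+1})$, and dually $\partial_t\sigma_n$ shifts the $\phi$-row to $(\phi_{j+1})$. These closed forms are what make the five identities accessible.

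The model case is \eqref{lattice1}, which I would prove as a Desnanot--Jacobi (Jacobi/Dodgson) identity. Consider the $(n+2)\times(n+2)$ double-bordered matrix
\begin{align*}
N=\left(\begin{array}{cccc}
m_{0,0}&\cdots&m_{0,n}&\hat{\phi}_0\\
\vdots& &\vdots&\vdots\\
m_{n,0}&\cdots&m_{n,n}&\hat{\phi}_n\\
\phi_0&\cdots&\phi_n&0
\end{array}\right),
\end{align*}
whose full determinant is $-\partial_t\tau_{n+1}$. Deleting its last two rows and last two columns, the six minors entering Jacobi's identity are read off directly: the full determinant is $-\partial_t\tau_{n+1}$, the double deletion is $\tau_n$, the two diagonal single deletions are $-\partial_t\tau_n$ and $\tau_{n+1}$, and the two off-diagonal single deletions are exactly $\sigma_n$ and $\hat{\sigma}_n$. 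Substituting, the signs combine so that Jacobi's identity reads $\partial_t\tau_{n+1}\cdot\tau_n-\tau_{n+1}\cdot\partial_t\tau_n=\sigma_n\hat{\sigma}_n$, which is \eqref{lattice1}.

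For \eqref{lattice2} and \eqref{lattice3} I would use the same determinantal technique on the augmented array that additionally carries the shifted moment row $m_{n,\cdot}$ (respectively the column $m_{\cdot,n}$) defining $\xi_n$ (respectively $\hat{\xi}_n$); the two identities are exchanged by the symmetry $x\leftrightarrow y$, $\xi\leftrightarrow\hat{\xi}$, $\sigma\leftrightarrow\hat{\sigma}$. A degree count shows these are not the textbook three-term Jacobi identity---on each side the two factors have unequal sizes $n$ and $n+1$---so I would establish them either by a Laplace/Sylvester expansion along two borders, or, more uniformly, straight from the Andr\'eief integral representations of Theorem \ref{prop2} via Cauchy--Binet symmetrisation \cite{andreief86,forrester18}.

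The hard part will be \eqref{lattice4} and \eqref{lattice5}, which carry the ordinary derivatives $\partial_t\hat{\sigma}_n$, $\partial_t\sigma_n$ and consist of four Hirota products rather than a single bilinear pair, so no one three-term determinant identity can produce them. The plan is to first note that adding \eqref{lattice4} and \eqref{lattice5} reproduces $\partial_t$ of \eqref{lattice1}: using $\xi_n+\hat{\xi}_n=\partial_t\tau_n$ the cross terms cancel and the right-hand side becomes $\partial_t(\sigma_n\hat{\sigma}_n)$. It therefore suffices to prove one of them independently and obtain the other by subtraction. The independent identity would insert the shifted-border closed form of $\partial_t\hat{\sigma}_n$ from the first step into a Pl\"ucker/Jacobi relation on a doubly augmented moment matrix that simultaneously exhibits the extra moment row and column responsible for $\xi_{n+1},\hat{\xi}_n$. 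Pinning down that augmented matrix and tracking its signs is the real obstacle; everything else either follows the clean pattern of \eqref{lattice1} or reduces to Andr\'eief.
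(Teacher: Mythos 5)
Your preparatory steps are sound: the rank-one law $\partial_t m_{i,j}=\hat{\phi}_i\phi_j$, the shifts $\partial_t\hat{\phi}_i=\hat{\phi}_{i+1}$, $\partial_t\phi_j=\phi_{j+1}$, and $\partial_t\tau_n=\xi_n+\hat{\xi}_n$ are all correct, and your Jacobi-identity derivation of \eqref{lattice1} from the double-bordered $(n+2)\times(n+2)$ moment matrix is exactly the paper's argument (its $\mathcal{D}_1$). Your observation that \eqref{lattice4}$+$\eqref{lattice5} equals $\partial_t$ of \eqref{lattice1} is also correct and is a genuine simplification the paper does not use. The gap is that the remaining four identities are never actually proved: for \eqref{lattice2}--\eqref{lattice3} you only name candidate strategies, and for the one surviving identity among \eqref{lattice4}--\eqref{lattice5} you explicitly defer ``pinning down that augmented matrix and tracking its signs'', which is where all the content lies. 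Moreover, your stated reason for abandoning the Jacobi route for \eqref{lattice2}--\eqref{lattice3} (the size mismatch $n$ versus $n+1$) is not a real obstruction: the paper fixes the bookkeeping by bordering with an extra $(0,\dots,0,1)^{T}$ column, applying \eqref{id:jacobi} to
\begin{align*}
\mathcal{D}_2=\det\left(\begin{array}{ccc}
m_{i,j}&\hat\phi_i&0\\
\phi_j&0&1\end{array}
\right)_{\substack{i=0,\cdots,n\\j=0,\cdots,n-1}},\qquad i_1=n,\ i_2=n+1,\ j_1=n+1,\ j_2=n+2,
\end{align*}
whose value is $\hat{\sigma}_n$; the double deletion gives $\sigma_{n-1}$, and the four single-deletion minors collapse through the $0/1$ border to $\tau_n$, $\xi_n$, $\partial_t\tau_n$, $\partial_t\xi_n$ (up to sign), yielding \eqref{lattice2} on the nose.

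For the remaining identity, note that even after your reduction, \eqref{lattice4} still contains four Hirota products, so it cannot come from a single three-term determinant identity; it must be split into two. The paper's split is $\tau_n\partial_t\xi_{n+1}-\xi_{n+1}\partial_t\tau_n=\sigma_n\hat{\alpha}_n$ and $\hat{\xi}_n\partial_t\tau_{n+1}-\tau_{n+1}\partial_t\hat{\xi}_n=\sigma_n\hat{\beta}_n$, each again a Jacobi identity on a bordered moment matrix, and the splitting is dictated by the two-term decomposition $\partial_t\hat{\sigma}_n=\hat{\alpha}_n+\hat{\beta}_n$ (shift the last moment row, respectively the last moment column, coming from $\partial_t m_{i,j}=m_{i+1,j}+m_{i,j+1}$) --- \emph{not} by your single shifted-border formula for $\partial_t\hat{\sigma}_n$. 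That formula is correct but is the wrong normal form here, since it offers no way to distribute $\sigma_n\partial_t\hat{\sigma}_n$ over the two bilinear pairs on the left of \eqref{lattice4}. Until you recompute $\partial_t\hat{\sigma}_n$ in the two-term form and carry out the two Jacobi identities (or supply an equivalent derivation), the proof of \eqref{lattice2}--\eqref{lattice5} is incomplete.
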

\begin{proof}First, by following Lemma 3.2 in \cite{chang18}, we have the derivative formulae as follows:
\begin{align*}
&\partial_t\tau_n=\xi_n+\hat \xi_n=-\tilde\tau_n,\quad \partial_t\xi_n=-\tilde\xi_n,\quad \partial_t\hat\xi_n=-\tilde{\hat\xi}_n,\quad\partial_t\sigma_n=\alpha_n+\beta_n,\quad \partial_t{\hat\sigma}_n=\hat\alpha_n+\hat\beta_n,
\end{align*}
where
\begin{align*}
 &\alpha_n=\det\left(\begin{array}{c}
m_{i,j}\\
\phi_j\end{array}
\right)_{\substack{i=0,\cdots,n-1\\j=0,\cdots,n-1,n+1}},\quad
 \hat{\alpha}_n=\det\left(
m_{i,j}\, \hat{\phi}_i
\right)_{\substack{i=0,\cdots,n-1,n+1 \\j=0,\cdots,n-1}},\\
&\beta_n=\det\left(\begin{array}{c}
m_{i,j}\\
\phi_j\end{array}
\right)_{\substack{i=0,\cdots,n-2,n\\j=0,\cdots,n}},\quad
 \quad \hat{\beta}_n=\det\left(
m_{i,j}\, \hat{\phi}_i
\right)_{\substack{i=0,\cdots,n \\j=0,\cdots,n-2,n}},\\
&\tilde \tau_n=\det\left(\begin{array}{cc}
m_{i,j}&\hat\phi_i\\
\phi_j&0\end{array}
\right)_{\substack{i=0,\cdots,n-1\\j=0,\cdots,n-1}}, \quad
\tilde\xi_n=\det\left(\begin{array}{cc}
m_{i,j}&\hat\phi_i\\
\phi_j&0\end{array}
\right)_{\substack{i=0,\cdots,n-2,n\\j=0,\cdots,n-1}},\\
&\tilde{\hat\xi}_n=\det\left(\begin{array}{cc}
m_{i,j}&\hat\phi_i\\
\phi_j&0\end{array}
\right)_{\substack{i=0,\cdots,n-1\\j=0,\cdots,n-2,n}}.
\end{align*}
The remaining part can be completed by use of the Jacobi identity \cite{hirota04}, which reads
\begin{align}
\mathcal{D} \mathcal{D}\left(\begin{array}{cc}
i_1 & i_2 \\
j_1 & j_2 \end{array}\right)=\mathcal{D}\left(\begin{array}{c}
i_1  \\
j_1 \end{array}\right)\mathcal{D}\left(\begin{array}{c}
i_2  \\
j_2 \end{array}\right)-\mathcal{D}\left(\begin{array}{c}
i_1  \\
j_2 \end{array}\right)\mathcal{D}\left(\begin{array}{c}
i_2  \\
j_1 \end{array}\right). \label{id:jacobi}
\end{align}
Here $\mathcal{D}$ is an indeterminate determinant. $\mathcal{D}\left(\begin{array}{cccc}
i_1&i_2 &\cdots& i_k\\
j_1&j_2 &\cdots& j_k
\end{array}\right)$ with $ i_1<i_2<\cdots<i_k,\ j_1<j_2<\cdots<j_k$ denotes the
determinant of the matrix obtained from $\mathcal{D}$ by removing the rows with indices
$i_1,i_2,\dots, i_k$ and the columns with indices $j_1,j_2,\dots, j_k$.

The first three relations \eqref{lattice1}-\eqref{lattice3} can be derived by taking
\begin{align*}
&\mathcal{D}_1=\det\left(\begin{array}{cc}
m_{i,j}&\hat\phi_i\\
\phi_j&0\end{array}
\right)_{\substack{i=0,\cdots,n\\j=0,\cdots,n}},\quad i_1=j_1=n+1,\,i_2=j_2=n+2,\\
&\mathcal{D}_2=\det\left(\begin{array}{ccc}
m_{i,j}&\hat\phi_i&0\\
\phi_j&0&1\end{array}
\right)_{\substack{i=0,\cdots,n\\j=0,\cdots,n-1}},\quad i_1=n,\, i_2=n+1,\,j_1=n+1,\, j_2=n+2,\\
&\mathcal{D}_3=\det\left(\begin{array}{cc}
m_{i,j}&\hat\phi_i\\
\phi_j&0\\
0&1\end{array}
\right)_{\substack{i=0,\cdots,n-1\\j=0,\cdots,n}},\quad i_1=n+1,\, i_2=n+2,\,j_1=n,\, j_2=n+1,
\end{align*}
respectively.

The validity of \eqref{lattice4} can be confirmed by combining two relations
\begin{align*}
&\tau_n\partial_t \xi_{n+1}=\xi_{n+1}\partial_t\tau_n+\sigma_n\hat\alpha_n,\\
&\hat\xi_n\partial_t \tau_{n+1}=\tau_{n+1}\partial_t\hat\xi_n+\sigma_n\hat\beta_n,
\end{align*}
which are consequences of applying the Jacobi identity to
\begin{align*}
&\mathcal{D}_{4,1}=\det\left(\begin{array}{ccc}
m_{i,j}&\hat\phi_i\\
\phi_j&0\end{array}
\right)_{\substack{i=0,\cdots,n-1,n+1\\j=0,\cdots,n}},\quad i_1= j_1=n+1,\,i_2=j_2=n+2,\\
&\mathcal{D}_{4,2}=\det\left(\begin{array}{cc}
m_{i,j}&\hat\phi_i\\
\phi_j&0\end{array}
\right)_{\substack{i=0,\cdots,n\\j=0,\cdots,n}},\quad i_1=n+1,\, i_2=n+2,\,j_1=n,\, j_2=n+2,
\end{align*}
 respectively.
 Similarly, by combing two relations 
 \begin{align*}
&\tau_n\partial_t \hat\xi_{n+1}=\hat\xi_{n+1}\partial_t\tau_n+\hat\sigma_n\alpha_n,\\
&\xi_n\partial_t \tau_{n+1}=\tau_{n+1}\partial_t\xi_n+\hat\sigma_n\beta_n,
\end{align*}
 obtained by applying the Jacobi identity to
 \begin{align*}
&\mathcal{D}_{5,1}=\det\left(\begin{array}{ccc}
m_{i,j}&\hat\phi_i\\
\phi_j&0\end{array}
\right)_{\substack{i=0,\cdots,n\\j=0,\cdots,n-1,n+1}},\quad i_1= j_1=n+1,\,i_2=j_2=n+2,\\
&\mathcal{D}_{5,2}=\det\left(\begin{array}{cc}
m_{i,j}&\hat\phi_i\\
\phi_j&0\end{array}
\right)_{\substack{i=0,\cdots,n\\j=0,\cdots,n}},\quad i_1=n,\, i_2=n+2,\,j_1=n+1,\, j_2=n+2,
\end{align*}
 respectively, one can prove \eqref{lattice5}.
\end{proof}

The bilinear form \eqref{lattice1}-\eqref{lattice5} has many intriguing properties. Firstly, when we consider the moments are symmetric, i.e. $\phi_i=\hat{\phi}_i$ and $m_{i,j}=m_{j,i}$, then this lattice equation can be degenerated to the C-Toda lattice, which was introduced in \cite{chang18,li19}. The reason is that in the symmetric case, $\xi_n=\hat\xi_n,\sigma_n=\hat\sigma_n$ and $2\xi_n$ can be regarded as the derivative of $\tau_n$. Therefore, equations \eqref{lattice4} and \eqref{lattice5} are the derivatives of \eqref{lattice2} and \eqref{lattice3}, which are naturally valid. At the same time, the nonlinear system can be degenerated as well. The nonlinear system \eqref{eq_non} will reduce to the nonlinear C-Toda lattice in \cite[eq (3.17)]{chang18} by setting $B_n=\hat B_n$, $C_n=\hat C_n=\sqrt{B_{n+1}A_{n+1}/{B_n}}$ in the symmetric case.

 Secondly, this bilinear form can be iterated. Starting with the initial values $\tau_0=\xi_0=\hat{\xi}_0=1$, $\sigma_0=\phi_0$ and $\hat{\sigma}_0=\hat{\phi}_0$, one can get $\tau_1$ from the equation \eqref{lattice1}, $\xi_1$, $\hat{\xi}_1$ from equations \eqref{lattice4} and \eqref{lattice5} and $\sigma_1$, $\hat{\sigma}_1$ from equations \eqref{lattice2} and \eqref{lattice3}. By iterating these equations repeatedly, one can get the general expressions for these variables (i.e. $\tau_n=\det(m_{i,j})_{i,j=0}^{n-1}$ and $\sigma_n$, $\hat{\sigma}_n$, $\xi_n$ and $\hat{\xi}_n$ have expressions in \eqref{taufunctions}), which means that the lattice equation is recursively solvable.

\section{Concluding remarks}
In this article, we focus on the two-parameter generalisation of the Cauchy two-matrix model, its average characteristic polynomials and corresponding integrable lattice. We show that the asymmetric Gram determinant plays an important role in the tau function theory of the integrable lattice \eqref{lattice1}-\eqref{lattice5} and therefore, in the general integrable lattice \eqref{gct} as well as its dual lattice equation. However, it is not the end of the story. In the work \cite{bertola092,forrester19}, it has been shown that with $d\mu_1(x)=x^ae^{-x}dx$ and $d\mu_2(y)=y^{a+1}e^{-y}dy$, the ($\theta$-deformed) Cauchy two-matrix model is related to the ($\theta$-deformed) Bures ensemble, and the partition function of the latter can be regarded as the $\tau$-function of the B-Toda lattice \cite{chang182,li20}. It implies that the asymmetric lattice equation \eqref{lattice1}-\eqref{lattice5} can be degenerated to the B-Toda lattice as well, which has a four-term recurrence relation of the form \eqref{1} with proper coefficients (c.f. \cite[Equation 3.43]{chang182}). Therefore, we'd like to know the mechanism of the symmetric/skew-symmetric reduction and whether there is any $\theta$-deformed Toda lattice of BKP type. The mechanism of reduction would reveal some subgroups of the affine Weyl group $\tilde{W}^{(k,k+1)}(A_l)$, and it is interesting to consider how to make reductions on the discrete eigenfunctions. Regarding the $\theta$-deformed B-Toda lattice, since the tau function of B-type is of Pfaffian form, one needs to consider more Pfaffian techniques to involve in the tau function expression. Besides, as there is an equation connected Sawada-Kotera equation and Kaup-Kuperschmidt equation, we'd like to explore the discrete version to connect B-Toda and C-Toda lattices.

As we showed, the tau-function of the lattice \eqref{lattice1}-\eqref{lattice5} is related to the partition function of the generalised Cauchy two-matrix model. The Cauchy two-matrix model, and corresponding Meijer G-function, recently has been studied in the theories of Hurwitz number and topological recursion \cite{bertola19}. It's still open for us to know whether the rationally weighted Hurwitz number is related to this generalised $\theta$-deformed integrable hierarchy.

\section*{Acknowledgement}
X. Chang is partially supported by National Natural Science Foundation of China (Grant nos.11688101, 11731014 and 11701550) and the Youth Innovation Promotion Association CAS. S. Li  is supported by the ARC Centre of Excellence for Mathematical and Statistical frontiers (ACEMS). S. Tsujimoto is partially supported by JSPS KAKENHI (Grant Nos. 19H01792, 17K18725) and G. Yu is partially supported by National Natural Science Foundation of China (Grant no. 11871336). The authors would also like to thank Profs. Xingbiao Hu and Dafeng Zuo for helpful discussions.

\small
\bibliographystyle{abbrv}

\def\cydot{\leavevmode\raise.4ex\hbox{.}}
  \def\cydot{\leavevmode\raise.4ex\hbox{.}} \def\cprime{$'$}

\end{document}